\newtheorem{lemma}{Lemma}
\newcommand{\mysetminus}{\hbox{\tikz{\draw[line width=0.2pt,line cap=round] (1.5pt,0) -- (0,3pt);}}}
\title{Hyper-Graph-Network Decoders for Block Codes}
\author{%
  Eliya Nachmani and Lior Wolf\\
  Facebook AI Research and Tel Aviv University}
\begin{document}

\maketitle

\begin{abstract}

Neural decoders were shown to outperform classical message passing techniques for short BCH codes. In this work, we extend these results to much larger families of algebraic  block codes, by performing message passing with graph neural networks. The parameters of the sub-network at each variable-node in the Tanner graph are obtained from a hypernetwork that receives the absolute values of the current message as input. To add stability, we employ a simplified version of the arctanh activation that is based on a high order Taylor approximation of this activation function. Our results show that for a large number of algebraic block codes, from diverse families of codes (BCH, LDPC, Polar), the decoding obtained with our method outperforms the vanilla belief propagation method as well as other learning techniques from the literature.
\end{abstract}

\section{Introduction}

Decoding algebraic  block  codes is an open problem and learning techniques have recently been introduced to this field. While the first networks were fully connected (FC) networks, these were replaced with recurrent neural networks (RNNs), which follow the steps of the belief propagation (BP) algorithm. These RNN solutions weight the messages that are being passed as part of the BP method with fixed learnable weights.

In this work, we add compute to the message passing iterations, by turning the message graph into a graph neural network, in which one type of nodes, called variable nodes, processes the incoming messages with a FC network $g$. Since the space of possible messages is large and its underlying structure random, training such a network is challenging. Instead, we propose to make this network adaptive, by training a second network $f$ to predict the weights $\theta_g$ of network $g$.

This ``hypernetwork'' scheme, in which one network predicts the weights of another, allows us to control the capacity, e.g., we can have a different network per node or per group of nodes. Since the nodes in the decoding graph are naturally stratified and since a per-node capacity is too high for this problem, the second option is selected. Unfortunately, training such a hypernetwork still fails to produce the desired results, without applying two additional modifications. The first modification is to apply an absolute value to the input of network $f$, thus allowing it to focus on the confidence in each message rather than on the content of the messages. The second is to replace the $arctanh$ activation function that is employed by the check nodes with a high order Taylor approximation of this function, which avoids its asymptotes. 

When applying learning solutions to algebraic block codes, the exponential size of the input space can be mitigated by ensuring that certain symmetry conditions are met. In this case, it is sufficient to train the network on a noisy version of the zero codeword. As we show, the architecture of the hypernetwork we employ is selected such that these conditions are met.

Applied to a wide variety of codes, our method outperforms the current learning based solutions, as well as the classical BP method, both for a finite number of iterations and at convergence of the message passing iterations. 

\section{Related Work}

Over the past few years, deep learning techniques were applied to error correcting codes. This includes encoding, decoding, and even, as shown recently in~\cite{kim2018deepcode}, designing new feedback codes. The new feedback codes, which were designed by an RNN, outperform the well-known state of the art codes (Turbo, LDPC, Polar) for a Gaussian noise channel with feedback.

Fully connected neural networks were used for decoding polar codes~\cite{gruber2017deep}. For short polar codes, e.g., $n=16$ bits, the obtained results are close to the optimal performance obtained with maximum a posteriori (MAP) decoding. Since the number of codewords is exponential in the number of information bits $k$, scaling the fully connected network to larger block codes is infeasible.

Several methods were introduced for decoding larger block codes  ($n\geqslant100$). For example in~\cite{nachmani2016learning} the belief propagation (BP) decoding method is unfolded into a neural network in which weights are assigned to each variable edge. The same neural decoding technique was then extended to the  min-sum algorithm, which is more hardware friendly~\cite{lugosch2017neural}. In both cases, an improvement is shown in comparison to the baseline BP method.

Another approach was presented for decoding Polar codes~\cite{cammerer2017scaling}. The polar encoding graph is partitioned into sub-blocks, and the decoding is performed to each sub-block separately. In \cite{kim2018communication} an RNN decoding scheme is introduced for convolutional and Turbo codes, and shown to achieve close to the optimal performance, similar to the classical convolutional codes decoders Viterbi and BCJR. 

Our work decodes block codes, such as LDPC, BCH, and Polar. The most relevant comparison is with~\cite{nachmani2017learning}, which improve upon~\cite{nachmani2016learning}. A similar method was applied to Polar code in \cite{teng2019low}, and another related work on Polar codes~\cite{cammerer2017scaling} introduced a non-iterative and parallel decoder. Another contribution learns the nodes activations based on components from existing decoders (BF, GallagerB, MSA, SPA)~\cite{vasic2018learning}. In contrast, our method learns the node activations from scratch.

The term {\em hypernetworks} is used to refer to a framework in which a network $f$ is trained to predict the weights $\theta_g$ of another network $g$. Earlier work in the field~\cite{klein2015dynamic,7410424} learned weights of specific layers in the context of tasks that required a dynamic behavior. Fuller networks were trained to predict video frames and stereo views~\cite{jia2016dynamic}. The term itself was coined in~\cite{ha2016hypernetworks}, which employed such meta-functions in the context of sequence modeling. A Bayesian formulation was introduced in a subsequent work~\cite{krueger2017bayes}. The application of hyper networks as meta-learners in the context of few-shot learning was introduced in~\cite{bertinetto2016learning}. 

An application of hypernetworks for searching over the architecture space, where evaluation is done with predicted weights conditioned on the architecture, rather than performing gradient descent with that architecture was proposed in~\cite{brock2018smash}. Recently, graph hypernetworks were introduced for searching over possible architectures~\cite{zhang2018graph}. Given an architecture, a graph hypernetwork that is conditioned on the graph of the architecture and shares its structure, generates the weights of the network with the given architecture. In our work, a non-graph network generates the weights of a graph network. To separate between the two approaches, we call our method hyper-graph-network and not graph hypernetwork.

\section{Background}

We consider codes with a block size of $n$ bits. It is defined by a binary generator matrix $G$ of size $k \times n$ and a binary parity check matrix $H$ of size $(n-k) \times n$.

The parity check matrix entails a Tanner graph, which has $n$ variable nodes and $(n-k)$ check nodes, see Fig.~\ref{fig:tanner_Trellis}(a). The edges of the graph correspond to the on-bits in each column of the matrix $H$. For notational convenience, we assume that the degree of each variable node in the Tanner graph, i.e.\@, the sum of each column of $H$, has a fixed value $d_v$.

\begin{figure}%
    \centering
    \begin{tabular}{c@{\quad\quad\quad}c}
        \includegraphics[width=.18\textwidth]{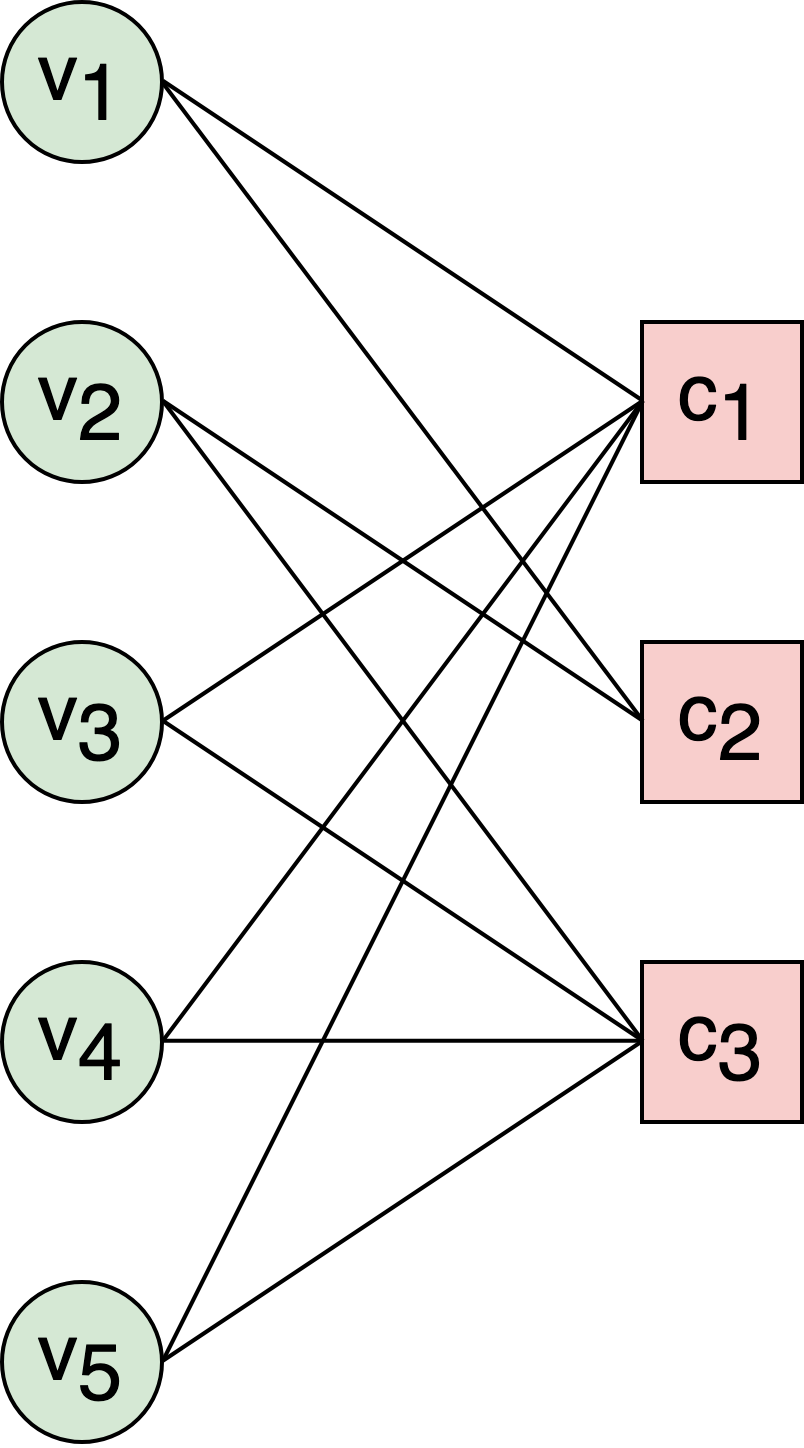} &
        \includegraphics[width=.38\textwidth]{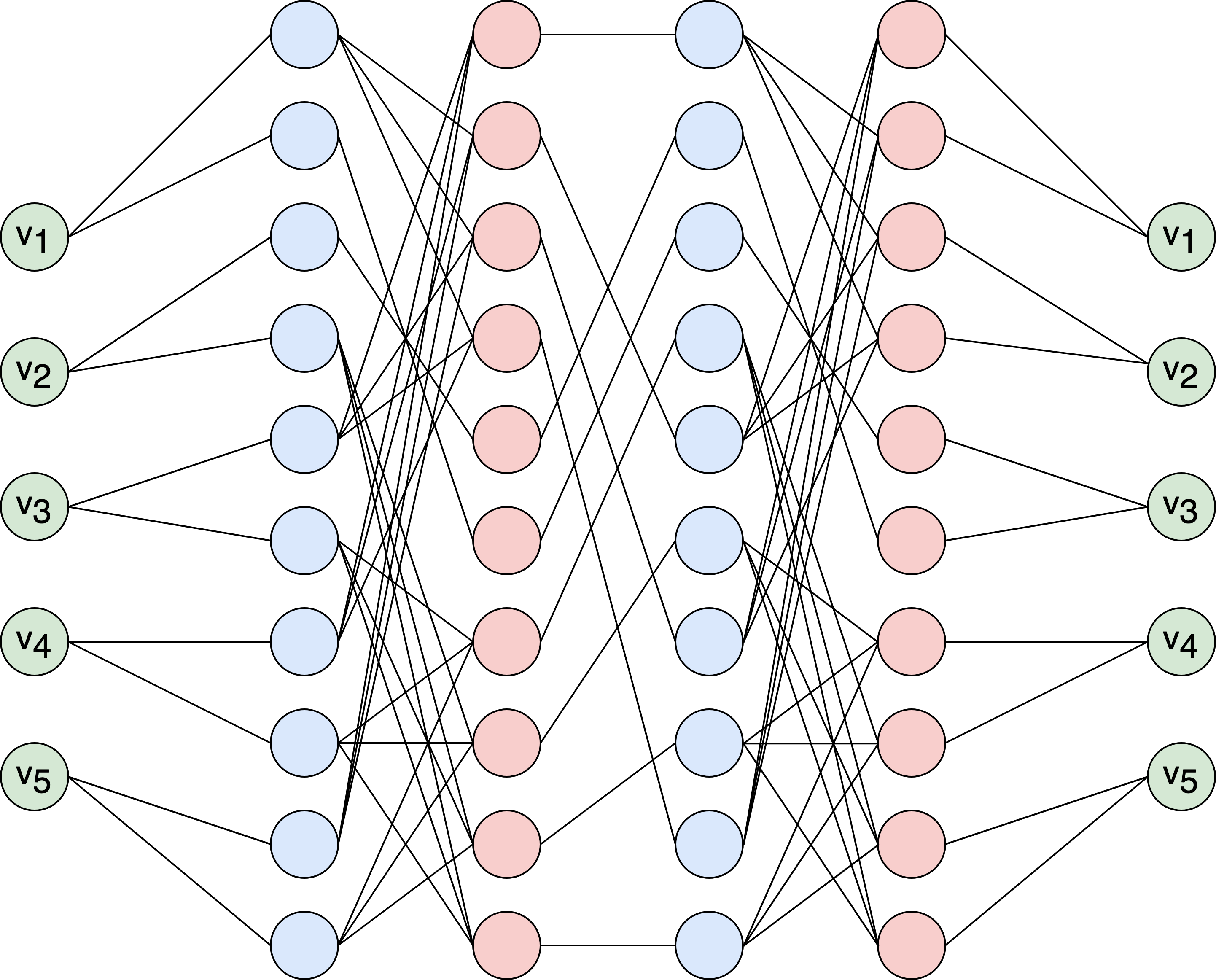}\\
         (a) & (b)\\
    \end{tabular}
    \caption{(a) The Tanner graph for a linear block code with $n=5$, $k=2$ and $d_v=2$. (b) The corresponding Trellis graph, with two iteration.}%
    \label{fig:tanner_Trellis}%
\end{figure}

The Tanner graph is unrolled into a Trellis graph. This graph 
starts with $n$ variable nodes and is then composed of two types of columns, variable columns and check columns. Variable columns consist of \textit{variable processing units} and check columns consist of \textit{check processing units}. {\color{black} $d_v$ variable processing units are} associate with each received bit, and the number of processing units in the variable column is, therefore, $E=d_v n$. The check processing units are also directly linked to the edges of the Tanner graph, where each parity check corresponds to a row of $H$. Therefore, the check columns also have $E$ processing units each. The Trellis graph ends with an output layer of $n$ variable nodes. See  Fig.~\ref{fig:tanner_Trellis}(b).

Message passing algorithms operate on the Trellis graph. The messages propagate from {variable columns} to {check columns} and from {check columns} to {variable columns}, in an iterative manner. The leftmost layer corresponds to a vector of log likelihood ratios (LLR) $l \in \mathbb{R}^n$ of the input bits:
$$
l_v = \log\frac{\Pr\left(c_v=1 | y_v\right)}{\Pr\left(c_v=0 | y_v\right)},
$$
where $v\in[n]$ is an index and $y_v$ is the channel output for the corresponding bit $c_v$, which we wish to recover.  

Let $x^j$ be the vector of messages that a column in the Trellis graph propagates to the next column. At the first round of message passing $j=1$, and similarly to other cases where $j$ is odd, a variable node type of computation is performed, in which the messages are added: 

\begin{equation}
x^{j}_e = x^{j}_{(c,v)} = l_v + \sum_{e'\in N(v)\setminus \{(c,v)\}} x^{j-1}_{e'},
\label{eq:base_var}
\end{equation}
where each variable node is indexed the edge $e=(c,v)$ on the Tanner graph and $N(v)=\{(c,v) | H(c,v)=1\}$, i.e, the set of all edges in which $v$ participates. By definition $x^0=0$ and when $j=1$ the messages are directly determined by the vector $l$. 

For even $j$, the check layer performs the following computations:
\begin{equation}
x^{j}_e = x^j_{(c,v)} = 2arctanh \left( \prod_{e'\in N(c) \setminus \{(c,v)\}}{tanh \left ( \frac{x^{j-1}_{e'}}{2} \right ) }\right)
\label{eq:base_check}
\end{equation}
where $N(c)=\{(c,v) | H(c,v)=1\}$ is the set of edges in the Tanner graph in which row $c$ of the parity check matrix $H$ participates.

A slightly different formulation is provided by~\cite{nachmani2017learning}. In this formulation, the $tanh$ activation is moved to the variable node processing units. In addition, a set of learned weights $w_e$ are added. Note that the learned weights are shared across all iterations $j$ of the Trellis graph. 
 
\begin{equation}
\label{eq:odd}
x^{j}_e = x^{j}_{(c,v)} = \tanh \left(\frac{1}{2}\left(l_v + \sum_{e'\in N(v)\setminus \{(c,v)\}} w_{e'}x^{j-1}_{e'}\right)\right), ~~~~~~~~~~~~~~~\text{if $j$ is odd}
\end{equation}

\begin{equation}
\label{eq:even}
x^{j}_e = x^j_{(c,v)} = 2arctanh \left( \prod_{e'\in N(c) \setminus \{(c,v)\}}{x^{j-1}_{e'}}\right)~~~~~~~~~~~~~~~\text{if $j$ is even}
\end{equation}

As mentioned, the computation graph alternates between variable columns and check columns, with  $L$ layers of each type. The final layer marginalizes {\color{black} the messages from the last check layer} with the logistic (sigmoid) activation function $\sigma$, and output $n$ bits.  The $v$th bit output at layer $2L+1$, in the weighted version, is given by:
\begin{equation}
o_v = \sigma \left( l_v + \sum_{e'\in N(v)}\bar{w}_{e'} x^{2L}_{e'} \right),
\label{eq:base_final}
\end{equation}
where $\bar{w}_{e'}$ is a second set of learnable weights.

\section{Method}

We suggest further adding learned components into the message passing algorithm. Specifically, we replace Eq.~\ref{eq:odd} (odd $j$) with the following equation:
\begin{equation}
\label{eq:oddreplaced}
x^{j}_e = x^{j}_{(c,v)} = g(l_v,x^{j-1}_{N(v,\mysetminus c)},\theta_g^j),
\end{equation}
where $x^{j}_{N(v,\mysetminus c)}$ is a vector of length $d_v-1$ that contains the elements of $x^{j}$ that correspond to the indices $N(v)\setminus \{(c,v)\}$, and $\theta_g^j$ has the weights of network $g$ at iteration $j$. 

In order to make $g$ adaptive to the current input messages at every variable node, we employ a hypernetwork scheme and use a network $f$ to determine its weights.
\begin{equation}
\label{eq:f}
    \theta_g^j = f(|x^{j-1}|,\theta_f)
\end{equation}
where $\theta_f$ are the learned weights of network $f$. Note that $g$ is fixed to all variable nodes at the same column. We have also experimented with different weights per variable (further conditioning $g$ on the specific messages $x^{j-1}_{N(v,\mysetminus c)}$ for the variable with index $e=(v,c)$). However, the added capacity seems detrimental.

The adaptive nature of the hypernetwork allows the variable computation, for example to neglect part of the inputs of $g$, in case the input message $l$ contains errors. 

Note that the messages $x^{j-1}$ are passed to $f$ in absolute value (Eq.~\ref{eq:f}). The absolute value of the messages is sometimes seen as measure for the correctness, and the sign of the message as the value (zero or one) of the corresponding bit~\cite{richardson2008modern}. Since we want the network $f$ to focus on the correctness of the message and not the information bits, we remove the signs.

The architecture of both $f$ and $g$ does not contain bias terms and employs the $tanh$ activations. The network $g$ has $p$ layers, i.e., $\theta_{g}=\left(W_1,...,W_p\right)$, for some weight matrices $W_i$. The network $f$ ends with $p$ linear projections, each corresponding to one of the layers of network $g$.  
As noted above, if a set of symmetry conditions are met, then it is sufficient to learn to correct the zero codeword. The link between the architectural choices of the networks and the symmetry conditions is studied in Sec.~\ref{sym_cond}.

Another modification is being done to the columns of the check variables in the Trellis graph. For even values of $j$, we employ the following computation, instead of Eq.~\ref{eq:even}. 
\begin{equation}
x^j_e=x^j_{(c,v)} = 2\sum_{m=0}^{q}\frac{1}{2m+1}\left ( \prod_{e'\in N(c) \setminus \{(c,v)\}}{x_{e'}^{j-1}} \right )^{2m+1}
\label{eq:our_check}
\end{equation}
in which $arctanh$ is replaced with its Taylor approximation of degree $q$. The approximation is employed as a way to {\color{black} stabilize the training process}. The $arctanh$ activation, has asymptotes in $x=1,-1$, and training with it often explodes. Its Taylor approximation is a well-behaved polynomial, see Figure~\ref{fig:taylor}. 

\begin{figure}[htp]
\centering
\includegraphics[width=.52\textwidth]{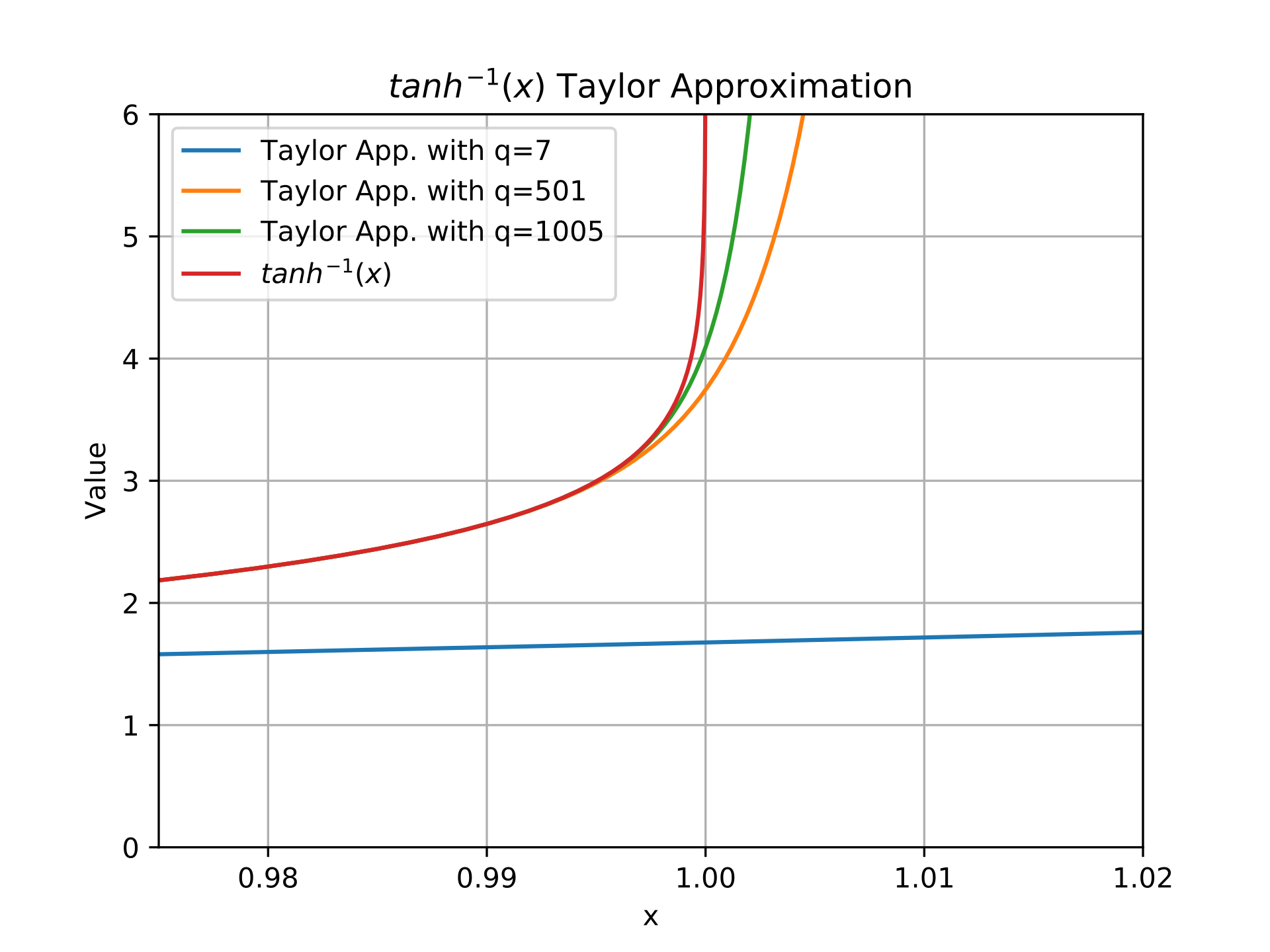}
\caption{Taylor Approximation of the $arctanh$ activation function.}
\label{fig:taylor}
\end{figure}

\subsection{Training}
In addition to observing the final output of the network, as given in Eq.~\ref{eq:base_final}, we consider the following marginalization for each iteration where $j$ is odd:
$o^{j}_v = \sigma \left( l_v + \sum_{e'\in N(v)}\bar{w}_{e'} x^{j}_{e'} \right)$. Similarly to~\cite{nachmani2017learning}, we employ the cross entropy loss function, which considers the error after every check node iteration out of the $L$ iterations:
\begin{equation}
\mathcal{L}=-\frac{1}{n}\sum_{h=0}^{L}\sum_{v=1}^{n} c_{v}\log(o^{2h+1}_{v})+(1-c_{v})\log(1-o^{2h+1}_{v})
\label{eq:loss}
\end{equation}
where $c_{v}$ is the ground truth bit. This loss simplifies, when learning the zero codeword, to $-\frac{1}{n}\sum_{h=0}^{L}\sum_{v=1}^{n} \log(1-o^{2h+1}_{v})$.

The learning rate was $1e-4$ for all type of codes, and the Adam optimizer~\cite{kingma2014adam} is used for training. The decoding network has ten layers which simulates $L=5$ iterations of a modified BP algorithm.

\section{Symmetry conditions}
\label{sym_cond}
For block codes that maintain certain symmetry conditions, the decoding error is independent of the transmitted codeword~\cite[Lemma 4.92]{richardson2008modern}. A direct implication is that we can train our network to decode only the zero codeword. Otherwise, training would need to be performed for all $2^k$ words. {\color{black} Note that training with the zero codeword should give the same results as training with all $2^k$ words.}

There are two symmetry conditions. 

\begin{enumerate}[leftmargin=*]
    \item For a check node with index $(c,v)$ at iteration $j$ and for any vector $b\in \{0,1\}^{d_v-1}$ 
    \begin{equation}
\Phi\left ( b^{\top}x^{j-1}_{N(\mysetminus v,c)} \right )= \left ( \prod_{1}^{K} b_{k}\right ) \Phi\left (  x^{j-1}_{N(\mysetminus v,c)} \right )
    \label{eq:sym_check}
    \end{equation}
    where 
    $x^{j}_{N(\mysetminus v,c)}$ is a vector of length $d_v-1$ that contains the elements of $x^{j}$ that correspond to the indices $N(c)\setminus \{(c,v)\}$ and
    $\Phi$ is the activation function used, e.g., $arctanh$ or the truncated version of it.
    \item For a variable node with index $(c,v)$ at iteration $j$, which performs computation $\Psi$
            \begin{equation}
                \Psi\left(-l_v,-x^{j-1}_{N(v,\mysetminus c)}\right )= -\Psi\left (l_v,x^{j-1}_{N(v,\mysetminus c)}\right)
                \label{eq:sym_var_0}
            \end{equation}
In the proposed architecture, $\Psi$ is a FC neural network ($g$) with $tanh$ activations and no bias terms. 

\end{enumerate}

Our method, by design, maintains the symmetry condition on both the variable and the check nodes. This is verified in the following lemmas.

\begin{lemma}
Assuming that the check node calculation is given by Eq.~\eqref{eq:our_check} then the proposed architecture satisfies the first symmetry condition.
\end{lemma}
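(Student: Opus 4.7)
The plan is to check Eq.~\eqref{eq:sym_check} directly by substitution, exploiting the fact that the truncated arctanh polynomial in Eq.~\eqref{eq:our_check} contains only odd powers of its argument. I read the symmetry condition in the standard way used in~\cite{richardson2008modern}: the vector $b$ is a sign pattern in $\{-1,+1\}^{d_v-1}$ (equivalently, $\{0,1\}$ under the BPSK mapping), and $b^\top x^{j-1}_{N(\mysetminus v,c)}$ denotes the element-wise product $b \odot x^{j-1}_{N(\mysetminus v,c)}$, so that $\Phi$ is applied coordinate-wise to a sign-flipped version of its vector input.

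First, I would factor the check-node formula as $\Phi(P)$ where $P := \prod_{e'\in N(c)\setminus\{(c,v)\}} x^{j-1}_{e'}$ is a scalar and
\begin{equation*}
\Phi(y) \;=\; 2 \sum_{m=0}^{q} \frac{1}{2m+1}\, y^{2m+1}
\end{equation*}
is the Taylor approximation used in place of $2\,arctanh$. The crucial structural observation is that every monomial in $\Phi$ has odd degree, so $\Phi$ is an odd scalar function: for any $\alpha\in\mathbb{R}$ and in particular for $\alpha\in\{-1,+1\}$, one has $\alpha^{2m+1}=\alpha$, hence $\Phi(\alpha y) = \alpha \,\Phi(y)$.

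Second, I would apply the sign pattern: replacing each $x^{j-1}_{e'}$ by $b_{e'}x^{j-1}_{e'}$ transforms the product $P$ into $\bigl(\prod_{k} b_{k}\bigr)\, P$, because the $b_{k}$ factors pull out of the product. Combining this with the oddness property above,
\begin{equation*}
\Phi\!\left(\bigl(\textstyle\prod_{k} b_{k}\bigr) P\right) \;=\; \bigl(\textstyle\prod_{k} b_{k}\bigr)\,\Phi(P),
\end{equation*}
which is exactly Eq.~\eqref{eq:sym_check} for the check-node map defined by Eq.~\eqref{eq:our_check}.

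There is no real obstacle: the lemma is essentially a one-line consequence of the two facts that (i) the product of the $d_v-1$ inputs absorbs an overall sign $\prod_k b_k$, and (ii) the truncated arctanh has no even-degree terms, so that sign commutes through $\Phi$. The only point requiring care is the notational one of clarifying that $b^\top x$ stands for the sign-flipping action on the vector of messages, matching the convention of Lemma~4.92 in \cite{richardson2008modern}; once this is fixed, the verification is purely algebraic and uses no properties of $q$ beyond finiteness.
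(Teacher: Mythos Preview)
Your proof is correct and follows essentially the same route as the paper: both arguments reduce the check-node computation to an odd polynomial $\Phi$ applied to the product $\prod_k x^{j-1}_k$, then pull out the overall sign $\prod_k b_k$ using the fact that every monomial has odd degree $2m+1$. Your explicit clarification that $b$ acts as a $\{-1,+1\}$ sign pattern (rather than literally $\{0,1\}$) is a helpful addition that the paper leaves implicit.
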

\begin{proof}
In our case the activation function $\Phi$ is Taylor approximation of $arctanh$. Let the input message at $j$ be $x^{j}_{N(\mysetminus v,c)} = \left(x^{j}_{1}, \dots, x^{j}_{K} \right)$ for $K=d_v-1$. We can verify that: 

    \begin{align*}
      x^{j}(b_{1}x^{j-1}_{1}, ..., b_{K}x^{j-1}_{K}) 
      &= 2\sum_{m=0}^{q}\frac{1}{2m+1} ( \prod_{k=1}^{K}{b_{k}x^{j-1}_{k}} )^{2m+1} =  2( \prod_{k=1}^{K}{b_{k}} ) \sum_{m=0}^{q}\frac{1}{2m+1} (  \prod_{k=1}^{K}{x^{j-1}_{k}}  )^{2m+1}\\ 
       &=  ( \prod_{k=1}^{K}{b_{k}}  )x_{j}(x^{j-1}_{1}, ..., x^{j-1}_{K})\qedhere
      \label{eq:sc_check_proof}
    \end{align*}
    where the second equality holds since $2m+1$ is odd.
\end{proof}

\begin{lemma}
Assuming that the variable node calculation is given by Eq.~\eqref{eq:oddreplaced} and Eq.~\eqref{eq:f},  $g$ does not contain bias terms and employs the $tanh$ activation, then the proposed architecture satisfies the variable symmetry condition.

\end{lemma}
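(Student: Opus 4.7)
The plan rests on two structural observations about the proposed architecture. First, the hypernetwork $f$ receives $|x^{j-1}|$ as input (Eq.~\ref{eq:f}), so the generated weights $\theta_g^j$ depend only on the magnitudes of the messages and are therefore \emph{invariant} under any sign flip of $x^{j-1}$; in particular $\theta_g^j$ is unchanged when we replace $x^{j-1}_{N(v,\mysetminus c)}$ by its negation. Second, $g$ is a fully connected network with $\tanh$ activations and no bias terms. The strategy is then: (i) show that, with $\theta_g^j$ held fixed, $g$ is an odd function of its concatenated input $(l_v, x^{j-1}_{N(v,\mysetminus c)})$; (ii) combine this with the sign invariance of $\theta_g^j$ to obtain the claim.

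For step (i), I would argue by induction on the depth $p$ of $g$. Write the input as the vector $z = (l_v, x^{j-1}_{N(v,\mysetminus c)})$ and denote the $i$-th pre-activation by $h_i$, so that $h_1 = W_1 z$ and $h_{i+1} = W_{i+1} \tanh(h_i)$ for $i \geq 1$, with $g(z;\theta_g^j) = h_p$ (or $\tanh(h_{p-1})$ followed by a linear projection, depending on the convention used; the argument is identical). Because the linear maps $W_i$ carry no bias, $h_1(-z) = -h_1(z)$; inductively, if $h_i(-z) = -h_i(z)$, then since $\tanh$ is odd, $\tanh(h_i(-z)) = -\tanh(h_i(z))$, and hence $h_{i+1}(-z) = W_{i+1}(-\tanh(h_i(z))) = -h_{i+1}(z)$. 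Thus $g(-z;\theta_g^j) = -g(z;\theta_g^j)$ for every fixed $\theta_g^j$.

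To conclude, I would combine the two facts. Let $\tilde{\theta}_g^j = f(|{-}x^{j-1}|,\theta_f) = f(|x^{j-1}|,\theta_f) = \theta_g^j$. Then
\begin{equation*}
\Psi\bigl(-l_v, -x^{j-1}_{N(v,\mysetminus c)}\bigr) = g\bigl(-l_v, -x^{j-1}_{N(v,\mysetminus c)}, \tilde{\theta}_g^j\bigr) = g\bigl(-l_v, -x^{j-1}_{N(v,\mysetminus c)}, \theta_g^j\bigr) = -g\bigl(l_v, x^{j-1}_{N(v,\mysetminus c)}, \theta_g^j\bigr),
\end{equation*}
which equals $-\Psi(l_v, x^{j-1}_{N(v,\mysetminus c)})$, as required.

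I do not foresee a real obstacle: the only thing that would go wrong is if the architecture silently introduced a bias (e.g.\ through the final linear projections of $f$ into $\theta_g^j$ creating an effective additive offset inside $g$), or if $l_v$ entered $g$ only through its magnitude. Both are ruled out by the design described in Section~4. The one point worth emphasising in the write-up is that the \emph{entire} input of $g$, including $l_v$, must be negated jointly for the inductive oddness argument to apply, which matches exactly the statement of the variable symmetry condition in Eq.~\ref{eq:sym_var_0}.
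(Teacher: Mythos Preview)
Your proposal is correct and follows essentially the same approach as the paper: the paper also writes $g$ as a composition of bias-free linear maps and $\tanh$, invokes the oddness of $\tanh$ to conclude $g(-z,\theta)=-g(z,\theta)$ for fixed $\theta$, and then uses $f(|{-}x^{j-1}|,\theta_f)=f(|x^{j-1}|,\theta_f)$ to finish. Your inductive spelling-out of the oddness step is slightly more explicit than the paper's one-line assertion, but the underlying argument is identical.
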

\begin{proof}
    Let $K=d_v-1$ and $x^{j}_{N(v,\mysetminus c)} = \left(x^{j}_{1}, \dots, x^{j}_{K} \right)$. In the proposed architecture for any odd $j\geqslant0$, $\Psi$ is given as 
    \begin{equation}
      g\left(l_v,x^{j-1}_{1},\dots, x^{j-1}_{K}, \theta^{j}_g\right) = tanh\left ( W_{p}^\top  \enspace ... \enspace tanh\left (W_{2}^\top tanh\left ( W_{1}^\top \left(l_v,x^{j-1}_{1},\dots, x^{j-1}_{K}\right) \right )\right )\right )
    \end{equation}
    
    where $p$ is the number of layers and the weights $W_{1},...,W_{p}$ constitute $\theta^{j}_g=f(|x^{j-1}|,\theta_f)$. 
    
    For real valued weights $\theta^{lhs}_g$ and $\theta^{rhs}_g$, since $tanh(x)$ is an odd function for any real value input, if  $\theta^{lhs}_g = \theta^{rhs}_g$ then
    $g\left(l_v,x^{j-1}_{1},\dots, x^{j-1}_{K},\theta^{lhs}_g \right) = - g\left(-l_v,-x^{j-1}_{1},\dots, -x^{j-1}_{K},\theta^{rhs}_g \right)$. In our case, 
        $\theta^{lhs}_g = f(|x^{j-1}|,\theta_f) = f(|-x^{j-1}|,\theta_f) = \theta^{rhs}_g$.

\end{proof}

\section{Experiments}

In order to evaluate our method, we train the proposed architecture with three classes of linear block codes: Low Density Parity Check (LDPC) codes~\cite{gallager1962low}, Polar codes~\cite{arikan2008channel} and Bose–Chaudhuri–Hocquenghem (BCH) codes~\cite{bose1960class}. All generator matrices and parity check matrices are taken from~\cite{channelcodes}. 

Training examples are generated as a zero codeword transmitted over an additive white Gaussian noise. For validation, we use the generator matrix $G$, in order to simulate valid codewords. Each training batch contains examples with different Signal-To-Noise (SNR) values.

The hyperparameters for each family of codes are determined by practical considerations. For Polar codes, which are denser than LDPC codes, we use a batch size of $90$ examples. We train with SNR values  of $1dB,2dB,..,6dB$, where from each SNR we present $15$ examples per single batch. For BCH and LDPC codes, we train for SNR ranges of $1-8dB$ (120 samples per batch). In our results we report, the test error up to an SNR of $6dB$, since evaluating the statistics for higher SNRs in a reliable way requires the evaluation of a large number of test samples (recall that in train, we only need to train on a noisy version of a single codeword). However, for BCH codes, which are the focus of the current literature, we extend the tests to $8dB$ in some cases.

In our experiments, the order of the Taylor series of $arctanh$ is set to $q=1005$. The network $f$ has four layers with $32$ neurons at each layer. The network $g$ has two layer with $16$ neurons at each layer. For BCH codes, we also tested a deeper configuration in which the network $f$ has four layers with $128$ neurons at each layer. 

The results are reported as bit error rates (BER) for different SNR values (dB). Fig.~\ref{fig:ber_snr} shows the results for sample codes, and Tab.~\ref{tab:ber_snr} lists results for more codes. As can be seen in the figure for Polar(128,96) code with five iteration of BP we get an improvement of $0.48dB$ over \cite{nachmani2017learning}. For LDPC MacKay(96,48) code, we get an improvement of $0.15dB$. For the BCH(63,51) with large $f$ we get an improvement of $0.45dB$ and with small $f$ we get a similar improvement of $0.43dB$. 
Furthermore, for every number of iterations, our method obtains better results then~\cite{nachmani2017learning}. We can also observe that our method with $5$ iteration achieve the same results as \cite{nachmani2017learning} with $50$ iteration, for BCH(63,51) and Polar(128,96) codes. Similar improvements were also observe for other BCH and Polar codes. 
{\color{black} Fig.~\ref{fig:ber_snr}(e) provides experiments for large and non-regular LDPC codes - WARN$(384,256)$ and TU-KL$(96,48)$. As can be seen, our method improves the results, even in non-regular codes where the degree varies. Note that we learned just one hypernetwork $g$, which corresponds to the maximal degree and we discard irrelevant outputs for nodes with lower degrees.}
In Tab.~\ref{tab:ber_snr} we present the negative natural logarithm of the BER. For the 15 block codes tested, our method get better results then the BP and \cite{nachmani2017learning} algorithms. This results stay true for the convergence point of the algorithms, i.e. when we run the algorithms with $50$ iteration.

\begin{table}[t]
    \centering
    
    \caption{A comparison of the negative natural logarithm of Bit Error Rate (BER) for three SNR values of our method with literature baselines. Higher is better.}
    \label{tab:ber_snr}
    \begin{tabular}{lc@{~}c@{~}cc@{~}c@{~}cc@{~}c@{~}cc@{~}c@{~}c}
    
    \toprule
         
        Method & \multicolumn{3}{c}{BP} &  \multicolumn{3}{c}{\cite{nachmani2017learning}} & \multicolumn{3}{c}{Ours} & \multicolumn{3}{c}{Ours deeper $f$}\\
        \cmidrule(lr){2-4}
        \cmidrule(lr){5-7}
        \cmidrule(lr){8-10}
        \cmidrule(lr){11-13}
         & 4 & 5 & 6 & 4 & 5 & 6 & 4 & 5 & 6 & 4 & 5 & 6\\ 
        \midrule 
        \multicolumn{13}{c}{--- after five iterations ---}\\
        Polar (63,32) & 3.52 & 4.04 & 4.48 & 4.14 & 5.32 & 6.67 & 4.25 & 5.49 & 7.02 & --- & --- & --- \\
        Polar (64,48) & 4.15 & 4.68 & 5.31 & 4.77 & 6.12 & 7.84 & 4.91 & 6.48 & 8.41 & --- & --- & --- \\
        Polar (128,64)& 3.38 & 3.80 & 4.15 & 3.73 & 4.78 & 5.87 & 3.89 & 5.18 & 6.94 & --- & --- & --- \\
        Polar (128,86) & 3.80 & 4.19 & 4.62 & 4.37 & 5.71 & 7.19 & 4.57 & 6.18 & 8.27 & --- & --- & --- \\
        Polar (128,96) & 3.99 & 4.41 & 4.78 & 4.56 & 5.98 & 7.53 & 4.73 & 6.39 & 8.57 & --- & --- & --- \\
        LDPC (49,24) & 5.30 & 7.28 & 9.88 & 5.49 & 7.44 & 10.47 & 5.76 & 7.90 & 11.17 & --- & --- & --- \\
        LDPC (121,60) & 4.82 & 7.21 & 10.87 & 5.12 & 7.97 & 12.22 & 5.22 & 8.29 & 13.00 & --- & --- & --- \\
        LDPC (121,70) & 5.88 & 8.76 & 13.04 & 6.27 & 9.44 & 13.47 & 6.39 & 9.81 & 14.04 & --- & --- & --- \\
        LDPC (121,80) & 6.66 & 9.82 & 13.98 & 6.97 & 10.47 & 14.86 & 6.95 & 10.68 & 15.80 & --- & --- & --- \\ 
        MacKay (96,48) & 6.84 & 9.40 & 12.57 & 7.04 & 9.67 & 12.75 & 7.19 & 10.02 & 13.16 & --- & --- & --- \\
        CCSDS (128,64) & 6.55 & 9.65 & 13.78 & 6.82 & 10.15 & 13.96 & 6.99 & 10.57 & 15.27 & --- & --- & --- \\
        BCH (31,16) & 4.63 & 5.88 & 7.60 & 4.74 & 6.25 & 8.00 & 5.05 & 6.64 & 8.80 & 4.96 & 6.63 & 8.80 \\
        BCH (63,36) &  3.72 & 4.65 & 5.66 & 3.94 & 5.27 & 6.97 & 3.96 & 5.35 & 7.20 & 4.00 & 5.42 & 7.34 \\
        BCH (63,45) & 4.08 & 4.96 & 6.07 & 4.37 & 5.78 & 7.67 & 4.48 & 6.07 & 8.45 & 4.41 & 5.91 & 7.91 \\
        BCH (63,51) & 4.34 & 5.29 & 6.35 & 4.54 & 5.98 & 7.73 & 4.64 & 6.08 & 8.16 & 4.67 & 6.19 & 8.22 \\
        \midrule   
        \multicolumn{13}{c}{--- at convergence ---}\\
        Polar (63,32) & 4.26 & 5.38 & 6.50 & 4.22 & 5.59 & 7.30 & 4.59 & 6.10 & 7.69 & --- & --- & --- \\ 
        Polar (64,48) & 4.74 & 5.94 & 7.42 & 4.70 & 5.93 & 7.55 & 4.92 & 6.44 & 8.39 & --- & --- & --- \\ 
        Polar (128,64) & 4.10 & 5.11 & 6.15 & 4.19 & 5.79 & 7.88 & 4.52 & 6.12 & 8.25 & --- & --- & --- \\ 
        Polar (128,86) & 4.49 & 5.65 & 6.97 & 4.58 & 6.31 & 8.65 & 4.95 & 6.84 & 9.28 & --- & --- & --- \\ 
        Polar (128,96) & 4.61 & 5.79 & 7.08 & 4.63 & 6.31 & 8.54 & 4.94 & 6.76 & 9.09 & --- & --- & --- \\
        LDPC (49,24)& 6.23 & 8.19 & 11.72 & 6.05 & 8.34 & 11.80 & 6.23 & 8.54 & 11.95 & --- & --- & --- \\
        MacKay (96,48) & 8.15 & 11.29 & 14.29 & 8.66 & 11.52 & 14.32 & 8.90 & 11.97 & 14.94 & --- & --- & --- \\
        BCH (63,36) & 4.03 & 5.42 & 7.26 & 4.15 & 5.73 & 7.88 & --- & --- & --- & 4.29 & 5.91 & 8.01\\
        BCH (63,45) & 4.36 & 5.55 & 7.26 & 4.49 & 6.01 & 8.20 & --- & --- & --- & 4.64 & 6.27 & 8.51\\ 
        BCH (63,51) & 4.58 & 5.82 & 7.42 & 4.64 & 6.21 & 8.21 & --- & --- & --- & 4.80 & 6.44 & 8.58\\ 
        \bottomrule
    \end{tabular}
\end{table}

To evaluate the contribution of the various components of our method, we ran an ablation analysis. We compare (i) our complete method, (ii) a method in which the parameters of $g$ are fixed and $g$ receives and additional input of $|x^{j-1}|$, (iii) a similar method where the number of hidden units in $g$ was increased to have the same amount of parameters of $f$ and $g$ combined, (iv) a method in which $f$ receives the $x^{j-1}$ instead of the absolute value of it, (v) a variant of our method in which $arctanh$ replaces its Taylor approximation, and (vi) a similar method to the previous one, in which gradient clipping is used to prevent explosion. The results, reported in Tab.~\ref{tab:ablation} demonstrate the advantage of our complete method. We can observe that without hypernetwork and without the absolute value in Eq.~\ref{eq:f}, the results degrade below those of \cite{nachmani2017learning}. We can also observe that for (ii), (iii) and (iv) the method reaches the same low quality performance. For (v) and (vi), the training process explodes and the performance is equal to a random guess.  {\color{black} In (vi), we train our method while clipping the arctanh at multiple threshold values ($\textrm{TH}=0.5, 1, 2, 4, 5$, applied to both the positive and negative sides, multiple block codes BCH(31,16), BCH(63,45), BCH(63,51), LDPC (49,24), LDPC (121,80), POLAR(64,32), POLAR(128,96), $L=5$ iterations). In all cases, the training exploded, similar to the no-threshold vanilla arctanh (v). In order to understand this, we observe the values when arctanh is applied at initialization for our method and for ~\cite{nachmani2016learning, nachmani2017learning}. In ~\cite{nachmani2016learning, nachmani2017learning}, which are initialized to mimic the vanilla BP, the activations are such that the maximal arctanh value at initialization is 3.45.  However in our case, in many of the units, the value explodes at infinity. Clipping does not help, since for any threshold value, the number of units that are above the threshold (and receive no gradient) is large. Since we employ hypernetworks, the weights $\theta^j_g$ of the network $g$ are dynamically determined by the network $f$ and vary between samples, making it challenging to control the activations $g$ produces. This highlights the critical importance of the Taylor approximation for the usage of hypernetworks in our setting.} The table also shows that for most cases, the method of~\cite{nachmani2017learning} slightly benefits from the usage of approximated $arctanh$.

\begin{table}[t]
    \centering
    \caption{Ablation analysis. The negative natural logarithm of BER results of our complete method are compared with alternative methods. Higher is better.}
    \label{tab:ablation}

    \begin{tabular}{lcccccc}
    \toprule
        Code & \multicolumn{2}{c}{BCH (31,16)} &  \multicolumn{2}{c}{BCH (63,45)} & \multicolumn{2}{c}{BCH (63,51)}\\
        \cmidrule(lr){2-3}
        \cmidrule(lr){4-5}
        \cmidrule(lr){6-7}
        Variant/SNR & 4 & 6 & 4 & 6 & 4 & 6\\ 
        \midrule  
        (i) Complete method& 4.96 & 8.80 & 4.41 & 7.91 & 4.67 & 8.22 \\
        (ii) No hypernetwork& 2.94 & 3.85 & 3.54 & 4.76 & 3.83 & 5.18 \\
        (iii) No hypernetwork, higher capacity& 2.94 & 3.85 & 3.54 & 4.76 & 3.83 & 5.18 \\
        (iv) No abs in Eq.~\ref{eq:f}& 2.86 & 3.99 & 3.55 & 4.77 & 3.84 & 5.20 \\
        (v) Not truncating $arctanh$ & 0.69 & 0.69 & 0.69 & 0.69 & 0.69 & 0.69 \\
        (vi) Gradient clipping & 0.69 & 0.69 & 0.69 & 0.69 & 0.69 & 0.69 \\
        \midrule
        ~\cite{nachmani2017learning}&  4.74 & 8.00 & 3.97 & 7.10 & 4.54 & 7.73 \\
        ~\cite{nachmani2017learning} with truncated $arctanh$& 4.78 & 8.24 & 4.34 & 7.34 & 4.53 & 7.84\\
        \bottomrule 
    \end{tabular}
\end{table}

\section{Conclusions}

We presents graph networks in which the weights are a function of the node's input, and demonstrate that this architecture provides the adaptive computation that is required in the case of decoding block codes. Training networks in this domain can be challenging and we present a method to avoid gradient explosion that seems more effective, in this case, than gradient clipping. By carefully designing our networks, important symmetry conditions are met and we can train efficiently. Our results go far beyond the current literature on learning block codes and we present results for a large number of codes from multiple code families.

\subsubsection*{Acknowledgments}

{\color{black} We thank Sebastian Cammerer and Chieh-Fang Teng for the helpful discussion and providing code for deep polar decoder. The contribution of Eliya Nachmani is part of a Ph.D. thesis research conducted at Tel Aviv University.}
\clearpage

\begin{figure}[t]
\centering
\begin{center}$
\begin{array}{c@{~}c}
\includegraphics[width=.51\textwidth]{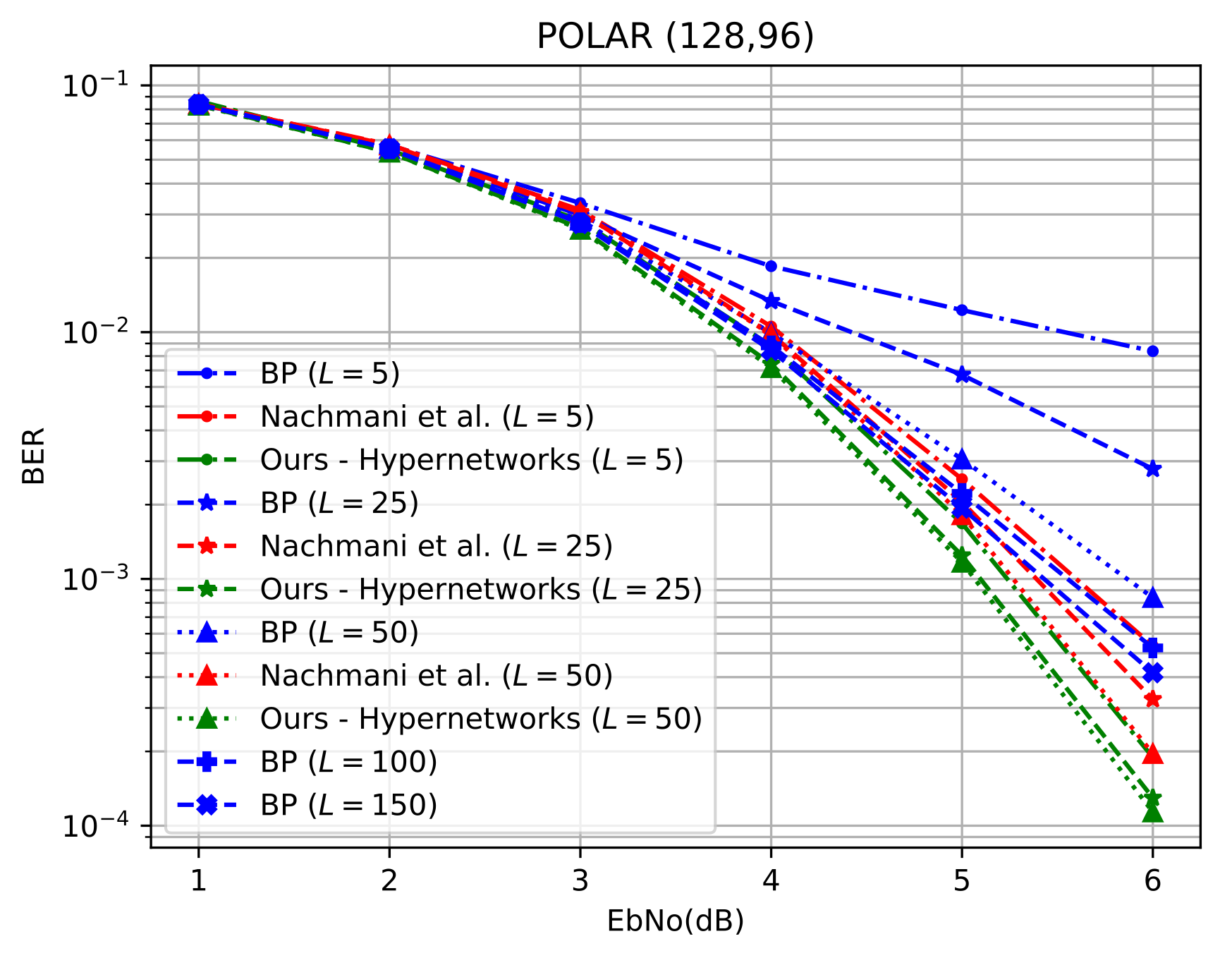}&
\includegraphics[width=.51\textwidth]{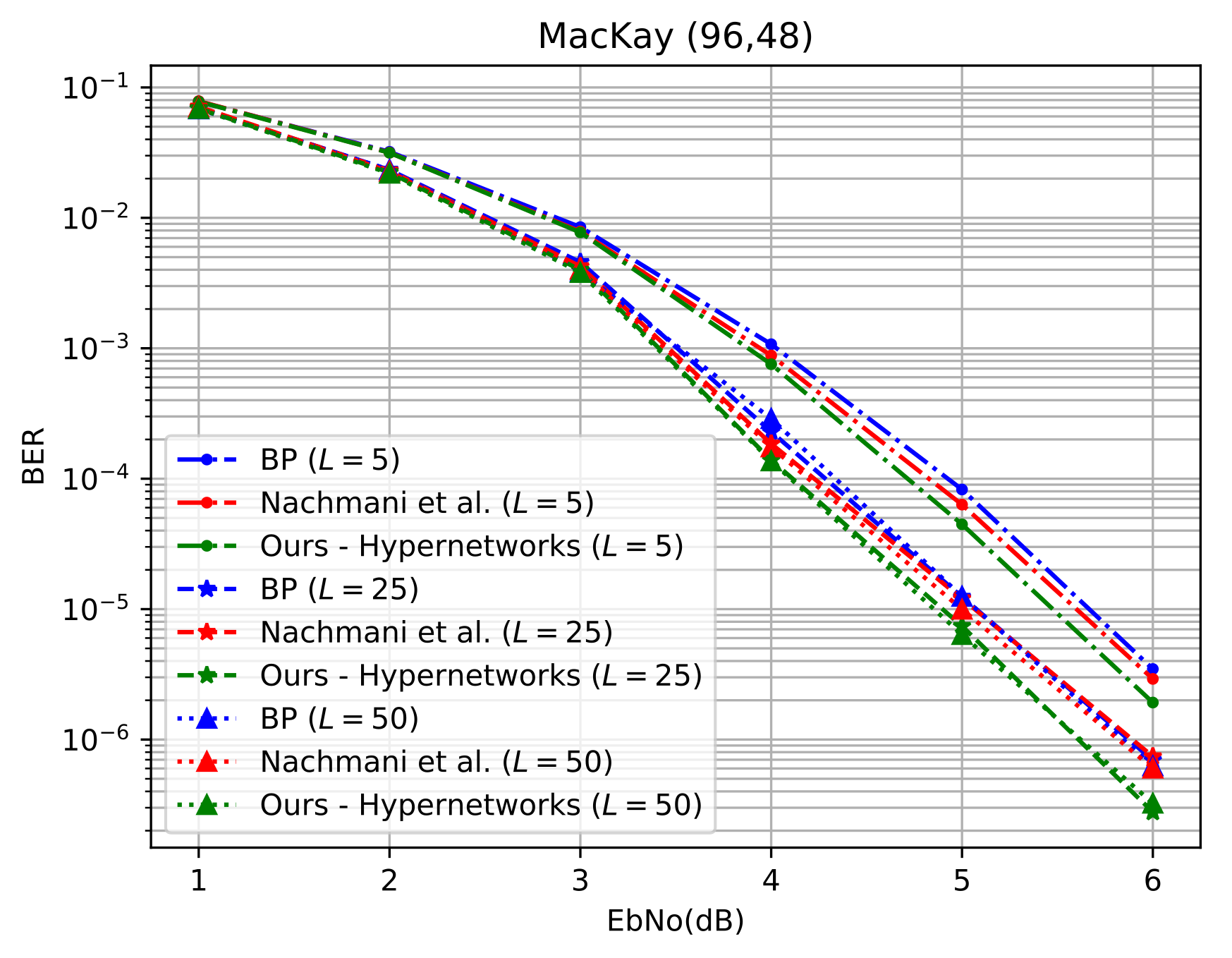} \\
(a) & (b)\\
\end{array}$
\end{center}

\begin{center}$
\begin{array}{c@{~}c}
\includegraphics[width=.51\textwidth]{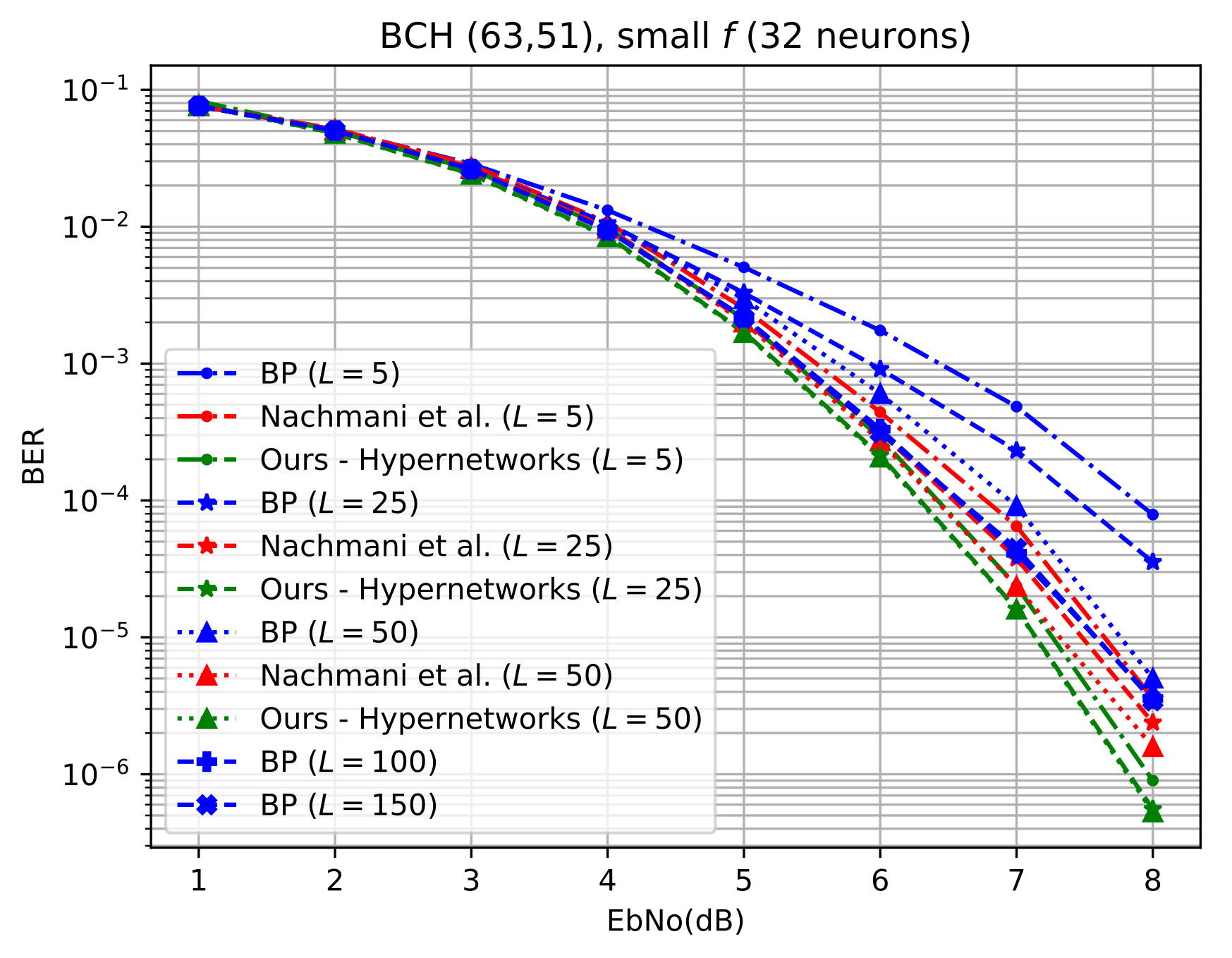}&
\includegraphics[width=.51\textwidth]{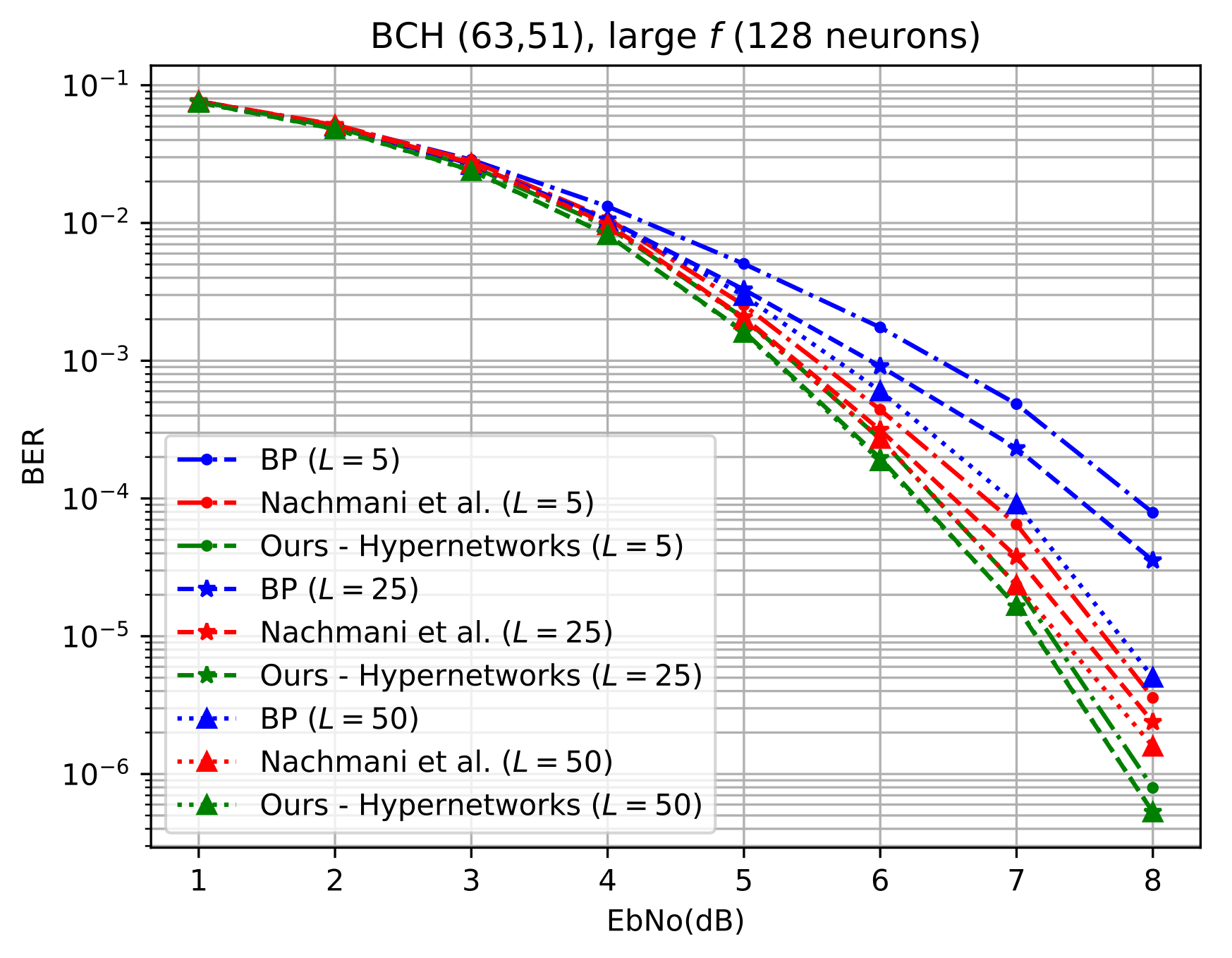} \\
(c) & (d)\\
\end{array}$
\end{center}

\begin{center}$
\begin{array}{c@{~}c}
\includegraphics[width=.51\textwidth]{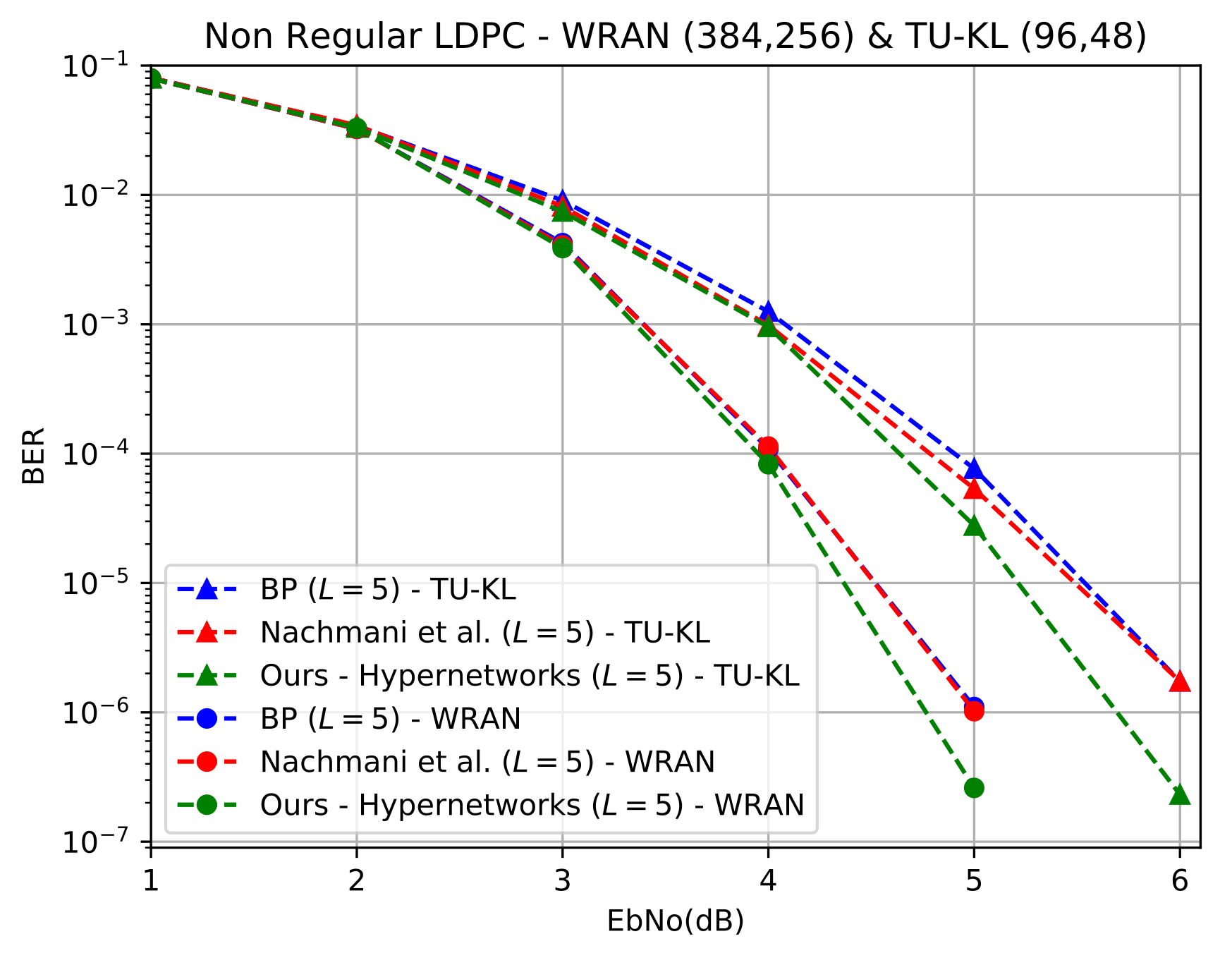} \\
(e)\\
\end{array}$
\end{center}

\caption{BER for various values of SNR for various codes. (a) Polar (128,96), (b) LDPC MacKay(96,48), (c) BCH (63,51), (d) BCH(63,51) with a deeper network $f$, (e) {\color{black}Large and non-regular LDPC codes: WRAN(384,256) and TU-KL(96,48).}}
\label{fig:ber_snr}
\end{figure}
\clearpage
\bibliographystyle{plain}
\bibliography{codes}
\end{document}